\newtheorem{theorem}{Theorem}
\newtheorem{lemma}{Lemma}
\newtheorem{proposition}{Proposition}
\newtheorem{corollary}{Corollary}
\title{On existence of perfect bitrades in Hamming graphs}
\author{I. Yu. Mogilnykh, F. I. Solov'eva %
\thanks{This work was funded by the Russian Science
Foundation under grant 18-11-00136.\\ 
\noindent E-mail address: \texttt{ivmog@math.nsc.ru, sol@math.nsc.ru}}}
\affil{Sobolev Institute of Mathematics\\
 Novosibirsk State University}
\begin{document}

\maketitle

%% Title, authors and addresses

%% use the tnoteref command within \title for footnotes;
%% use the tnotetext command for the associated footnote;
%% use the fnref command within \author or \address for footnotes;
%% use the fntext command for the associated footnote;
%% use the corref command within \author for corresponding author footnotes;
%% use the cortext command for the associated footnote;
%% use the ead command for the email address,
%% and the form \ead[url] for the home page:
%%
%% \title{Title\tnoteref{label1}}
%% \tnotetext[label1]{}
%% \author{Name\corref{cor1}\fnref{label2}}
%% \ead{email address}
%% \ead[url]{home page}
%% \fntext[label2]{}
%% \cortext[cor1]{}
%% \address{Address\fnref{label3}}
%% \fntext[label3]{}

%% Use \dochead if there is an article header, e.g. \dochead{Short communication}
%% \dochead can also be used to include a conference title, if directed by the editors
%% e.g. \dochead{17th International Conference on Dynamical Processes in Excited States of Solids}

\begin{abstract}

A pair $(T_0,T_1)$ of  disjoint sets of vertices of a graph $G$ is called a {\it
perfect bitrade} in $G$ if any ball of radius 1 in $G$ contains exactly one vertex in $T_0$ and $T_1$ or none simultaneously.
The {\it volume} of a perfect bitrade $(T_0,T_1)$ is the size of $T_0$. In particular, if $C_0$ and $C_1$ are distinct perfect codes with minimum distance $3$ in $G$ then
$(C_0\setminus C_1,C_1\setminus C_0)$ is a perfect bitrade.  For any $q\geq 3$, $r\geq 1$  we construct perfect
bitrades in the Hamming graph $H(qr+1,q)$ of volume $(q!)^r$ and show that for $r=1$ their volume is minimum.
\end{abstract}

\noindent{\bf Keywords:}
perfect code, one-error-correcting code, trade, bitrade, spherical bitrade, perfect bitrade, MDS code, alternating group

%% keywords here, in the form: keyword \sep keyword

%% PACS codes here, in the form: \PACS code \sep code

%% MSC codes here, in the form: \MSC code \sep code
%% or \MSC[2008] code \sep code (2000 is the default)

\section{Introduction}

Bitrades are used 
for constructing large classes of codes and designs
and investigating nontrivial
structural properties of these combinatorial objects. It should be
noted that in general bitrades are defined independently on
including them into  codes or designs. Bitrades could exist even
regardless of existence of the parent objects with the corresponding parameters. These facts provide a serious additional
motivation for constructing bitrades and studying their
properties. Classical problems in this area are the existence of
bitrades and bounds on their volumes.

In the paper we consider the problem of constructing perfect
bitrades, which concerns to the classical problem of existence of
perfect codes in the non prime power case. In 1973 Zinoviev and
Leontiev \cite{ZL73} and independently Tiet\"{a}v\"{a}inen
\cite{Tiet} proved that if $q$ is a power of a prime number then
there are only perfect codes with the parameters of $q$-ary
Hamming codes, binary and ternary Golay codes. The number of
perfect one-error-correcting codes is double exponential
\cite{AK}, however full classification and enumeration are still
open problems. By attempts of several authors it was proved that
for minimum distances more than 5 there are no perfect codes over
a non prime alphabets. In \cite{BZLF} Bassalygo at al. established
the nonexistence of perfect codes for $q=2^i3^j$, $i,j\geq 1$ with
minimum distance at least 5. In 1964 Golomb and Posner proved the
nonexistence of perfect codes of length 7 with minimum distance 3 over the alphabet of six elements. More information could be found in
\cite{Zin_diss} and in the survey \cite{H2008} with the lists of
references there.  A special case of bitrades arize from
components ($i$-components, $\alpha$-components) of codes. We
refer to \cite{S2008Switch} for a survey on the switchings of
$i$-components in perfect codes, see also \cite{Ost,PL99}. We note
that a concept of bitrades  was developed for MDS codes, see a
work of Potapov \cite{Pot}.

{\it A Steiner $(w-1,w,n)$-bitrade} $(T_0,T_1)$ is defined as a
pair of disjoint collections $T_0$ and $T_1$ of $w$-subsets of a
$n$-element point set, such that any $(w-1)$-subset of the point
set is a subset of exactly one set in $T_0$ and $T_1$ or none. A
lower bound on the volume (i.e. the size of $|T_0|$) of such
bitrade is obtained by Hwang in \cite{Hwang} along with a
characterization for the bitrades of minimum volume. For a survey
on Steiner and other related bitrades we refer to \cite{HK}.
Krotov et al. \cite{KMP} suggested a generalization of this concept
for $q$-ary Steiner bitrades and established an attainable lower bound for their volumes.

% Note that there are several recursive construction for
%Steiner triple systems that exploit a Pasch-configuration (which is a trade of a minimum volume $(2,3,n)$-bitrade),e.g. Assmuss-Mattson construction, however no such construction are known for

%A following concept of  was suggested in \cite{KMP}.
%Consider $G$ be a distance-regular graph that contains a set $S$ of cliques that attain the Delsarte bound such that every edge of $G$ is included in
%the same number of cliques in $S$.  A pair of disjoint
% sets of vertices $T_0$ and $T_1$ is called {\it a clique bitrade}\cite{KMP} if for a clique $C$ in $S$ if  $|T_0\cap C|=|T_1\cap C|$.
%In particular,

%case of Grassman graph, this concept is

% Further
%in  \cite{KMP} it is show

%with property that a trade of a bitrade can be embedded into a perfect code
For perfect bitrades that are embedded into
perfect codes in $H(n,2)$ the lower bound $2^{(n-1)/2}$ for their
volumes is known, see Etzion and Vardy \cite{EV98} or Solov'eva
\cite{S}. The argument of \cite{EV98} holds for the perfect
bitrades regardless of being embedded into a perfect code and for
any odd $n$ not necessarily being a power of two but one. The
bound is attained on so-called minimum $i$-components, that were
used for constructing first nonlinear perfect binary codes by
Vasil'ev. A classification of perfect bitrades in the binary case
was obtained by Krotov for $n=9$  in  \cite{Kro}.

Given a perfect bitrade $(T_0,T_1)$ in the Hamming graph $H(n,q)$,
it is not hard to see that the vector $\chi_{T_0}-\chi_{T_1}$ is
an eigenvector of the adjacency matrix of $H(n,q)$ with the eigenvalue
$-1$. Here $\chi_{T_0}$ and $\chi_{T_1}$ denote the characteristic
vectors of $T_0$ and $T_1$ respectively in the vertex set of
$H(n,q)$. This fact relates the problem of determining the minimum
volume of perfect bitrades to the problem of finding eigenvectors
of $H(n,q)$ with minimum size of support. For eigenvalue $-1$ this
problem was firstly considered in \cite{KroVor} and solved for the
eigenvalue $(n-1)q-n$ in \cite{V} by Valyuzhenich. The approach of
work \cite{V} was further extended in \cite{VV} for arbitrary
eigenvalue of the Hamming graph with complete solution for all
eigenvalues in case when $q$ is at least $4$.
 In particular, the result \cite{VV} implies a lower bound $2^{n-\frac{n-1}{q}}(q-1)^{\frac{n-1}{q}}$  for the volume of the perfect bitrades in $H(n,q)$, $q\geq 4$.
% In the $q$-ary case, where $q$ is a power of a prime a MDS-code based
A construction for perfect bitrades from \cite{KroVor}
 gives the upper bound $2^{\frac{n-1}{q}+1}q^{\frac{(n-1)(q-2)}{q}}$ on the minimum volume of a bitrade in $H(n,q)$ for $q$ being powers of primes.

The current paper is organized as follows. Basic definitions and a
revision of some previous results are given in Section 2. In
particular, we introduce a concept of a spherical bitrade which is
crucial for obtaining the main results of the paper. Basic theory
regarding spherical and perfect bitrades is presented in Section
3. We reveal interrelations between perfect and spherical bitrades
and eigenfunctions of Hamming graphs and obtain a natural
recursive construction for spherical bitrades. In particular, the
results of Section 3 allow to reduce the problem of constucting
perfect bitrades in $H(qr+1,q)$ to constructing spherical bitrades
in $H(q,q)$. The latter is solved in Section 4 as we split the
opposite of the repetition code by the parity corresponding
permutation in order to obtain bitrades. Thus we construct
% the existence problem for the perfect bitrades in the Hamming graphs by constructing
perfect bitrades in the Hamming graph $H(qr+1,q)$ of the volume
$(q!)^r$ for any $q\geq 3$, $r\geq 1$.
 %The corresponding perfect bitrade in $H(q+1,q)$ is of the volume $q!$, which is the minimum value as we argue in Section 5.
In Section 5 using a combinatorial argument we show that the
actual value of the minimum volume of perfect bitrade in
$H(q+1,q)$ is $q!$.

\section{Definitions and preliminaries}

The vertex set of the {\it Hamming graph} $H(n,q)$
consists of the tuples of length $n$ over the alphabet set
$\{0,\ldots,q-1\}$ which we denote by ${\cal A}$ and tuples $x$
and $y$ are adjacent if they differ in exactly one coordinate
position. A {\it code} in a graph $G$ is a subset of its vertices.
The {\it size} of $C$ is $|C|$ and its
 {\it minimum distance} $d_C$ is $d_C=min_{x,y\in C,x\neq y}d(x,y)$, where $d(x,y)$ is the
length of a shortest path connecting $x$ and $y$. A code $C$ is
{\it one-error-correcting perfect} (in throughout what follows
perfect) if the balls of radius 1 centered at the vertices of $C$
part the vertex set of $G$. If $C$ and $D$ are  codes in $G$, $d(C,D)$ denotes $min\{d(x,y):x \in C,y \in D\}$.

When $C$ is a code in the Hamming graph $H(n,q)$ we use a
traditional expression {\it $q$-ary code of length $n$}. If $q$ is
a power of  a prime, the alphabet set ${\cal A}$ is associated with the Galois
field $F_q$ of order $q$. In this case  we consider the linear space $F_q^n$ on
the set vertices of $H(n,q)$ naturally inherited from $F_q$. With
this regard, a code is called {\it linear}, if it is a linear
subspace of $F_q^n$. The well-known {\it Singleton bound} states
that the size of a q-ary code $C$ of length $n$ with minimum
distance $d_C$ is not greater than $q^{n-d_C+1}$. If the size of
the code $C$ attains the Singleton bound, $C$ is called a
{\it MDS code}.

Let $x$ be a vertex of a graph $G$. Denote by ${\cal O}(x)$ the
sphere of radius one centered at $x$, i.e. the set of the
neighbors of $x$ in $G$. Let $S_0$ and $S_1$ be two disjoint codes
in $G$. The ordered pair $(S_0,S_1)$ is called a {\it spherical
bitrade} in $G$ if for any vertex $x$ of $G$ we have
$$|{\cal O}(x)\cap S_0|=|{\cal O}(x)\cap S_1|\in \{0,1\}. $$
Denote by $\overline{{\cal O}}(x)$ the ball of radius one centered
at a vertex $x$ in $G$, i.e.  $\overline{{\cal O}}(x)={\cal
O}(x)\cup \{x\}$. The ordered pair $(S_0,S_1)$ is called a {\it
perfect bitrade} in $G$ if for any vertex $x$ of $G$ we have
$$|\overline{{\cal O}}(x)\cap S_0|=|\overline{{\cal O}}(x)\cap S_1|\in \{0,1\}. $$
The {\it volume} of a spherical or perfect bitrade $(S_0, S_1)$ is
$|S_0|$ (or $|S_1|$).

A real-valued function $f$ defined on the vertex set of a graph $G$ is called a $\lambda$-{\it eigenfunction}, if it is not the all-zero function and
$$\lambda f(x)=\sum_{y\in {\cal O}(x)} f(y).$$
In other words, the vector of the values  of $f$ is an eigenvector
of the adjacency matrix of the graph $G$ with the eigenvalue
$\lambda$. It is well-known that the eigenvalues of the adjacency
matrix of the Hamming graph $H(n,q)$ is the  set $\{n(q-1)-qi:i
\in \{0,\ldots,n\}\}$.

For a pair of tuples $x$ and $y$  over the same alphabet define
their concatenation by $x|y$. If $C$ and $D$ are two codes then
denote by $C\times D$  the following code $\{x|y:x \in C, y \in
D\}$. Let $f$ and $g$ be two real-valued functions on the vertices
of $H(n,q)$ and $H(n',q)$ respectively.
 The {\it tensor product} of $f$ and $g$ is the function $f\cdot g$ defined on the set of vertices of $H(n+n',q)$ as follows:
$(f\cdot g)(x|y)=f(x)g(y)$.

\begin{lemma}\label{lemVV} \cite{VV}[Corollary 1] Let $f$  be a $\lambda$-eigenfunction of $H(n,q)$ and  $g$ be a $\mu$-eigenfunction of $H(n',q)$.
  Then the function $f\cdot g$ is a $(\lambda+\mu)$-eigenfunction of $H(n+n',q)$.
\end{lemma}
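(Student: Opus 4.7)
The plan is to verify the eigenfunction identity directly at an arbitrary vertex of $H(n+n',q)$, exploiting the fact that the neighbours of a concatenated vertex split cleanly into two groups according to which side the differing coordinate lies on.

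First I would fix a vertex $x|y$ of $H(n+n',q)$, with $x$ a vertex of $H(n,q)$ and $y$ a vertex of $H(n',q)$. A neighbour of $x|y$ differs in exactly one coordinate, which is either among the first $n$ positions or among the last $n'$ positions. Hence $\mathcal{O}(x|y)$ is the disjoint union of $\{x'|y : x'\in \mathcal{O}(x)\}$ and $\{x|y' : y'\in \mathcal{O}(y)\}$.

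Using the definition of the tensor product, I would compute
\begin{align*}
\sum_{z\in \mathcal{O}(x|y)} (f\cdot g)(z) &= \sum_{x'\in \mathcal{O}(x)} f(x')g(y) + \sum_{y'\in \mathcal{O}(y)} f(x)g(y') \\
&= g(y)\sum_{x'\in \mathcal{O}(x)} f(x') + f(x)\sum_{y'\in \mathcal{O}(y)} g(y') \\
&= g(y)\cdot \lambda f(x) + f(x)\cdot \mu g(y) \\
&= (\lambda+\mu)(f\cdot g)(x|y),
\end{align*}
where the third equality invokes the eigenfunction equations for $f$ and $g$ respectively. This is exactly the required identity at the vertex $x|y$, and since $x|y$ was arbitrary it holds throughout $H(n+n',q)$.

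Finally, I must check that $f\cdot g$ is not identically zero, as this is part of the definition of an eigenfunction. Since $f$ and $g$ are eigenfunctions, there exist $x_0$ and $y_0$ with $f(x_0)\neq 0$ and $g(y_0)\neq 0$; then $(f\cdot g)(x_0|y_0)=f(x_0)g(y_0)\neq 0$. No step here looks genuinely difficult; the only thing to be careful about is the disjoint decomposition of $\mathcal{O}(x|y)$, which is clean because a Hamming edge changes exactly one coordinate, so neighbours from the two sides do not overlap.
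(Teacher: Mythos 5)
Your proof is correct and complete. Note that the paper itself does not prove this lemma at all: it imports the statement from \cite{VV} (Corollary 1 there), so there is no in-paper argument to compare against. Your direct verification is the natural elementary one: the key observation that $\mathcal{O}(x|y)$ decomposes as the disjoint union of $\{x'|y : x'\in\mathcal{O}(x)\}$ and $\{x|y' : y'\in\mathcal{O}(y)\}$ (equivalently, that the adjacency matrix of $H(n+n',q)$ is $A\otimes I + I\otimes A'$) immediately yields the eigenvalue $\lambda+\mu$, and you correctly remembered to check that $f\cdot g$ is not identically zero, which is part of the paper's definition of an eigenfunction. Nothing is missing.
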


Let $i_1,\ldots,i_k$ be pairwise distinct coordinate positions,
$\{i_1,\ldots,i_k\} \subseteq \{1,\ldots,n\}$ and  $a_1,\ldots,
a_k$ be symbols of the alphabet set ${\cal A}$. The set of all tuples $x$ of length $n$ over ${\cal A}$ such that
$x_{i_l}=a_l \mbox{ for any }l\in \{1,\ldots,k\}$
 is called a {\it face} in $H(n,q)$ and is denoted by $\Gamma_{i_1\ldots i_k}^{a_1\ldots a_k}$. Any position from
$\{1,\ldots,n\}\setminus \{i_1,\ldots,i_k\}$ in the face
$\Gamma_{i_1\ldots i_k}^{a_1\ldots a_k}$ is called {\it free}.
 We finish the preliminary part of the paper with a well-known
result of Delsarte.

\begin{theorem}\cite{Del}
Let $f$ be a $(n(q-1)-mq)$-eigenfunction of $H(n,q)$ for $m\in
\{0,\ldots,n\}$. Then for any $k\leq m-1$, any pairwise distinct
elements $i_1,\ldots,i_k$ of $\{1,\ldots,n\}$ and symbols $a_1,\ldots,a_k$ of ${\cal A}$ the sum of the values of $f$ on the face
$\Gamma_{i_1\ldots i_k}^{a_1\ldots a_k}$ is zero.
\end{theorem}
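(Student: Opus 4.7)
The plan is to show that the face-sum, viewed as a function on a smaller Hamming graph, satisfies the eigenfunction equation for an eigenvalue that does not belong to the spectrum of that smaller graph, and must therefore vanish. By symmetry of the coordinates we may assume $i_1=1,\ldots,i_k=k$ and define $S:{\cal A}^k\to\mathbb{R}$ by
$$S(a_1,\ldots,a_k)=\sum_{y\in{\cal A}^{n-k}}f(a_1,\ldots,a_k,y).$$
The aim is to prove $S\equiv 0$, which is exactly the statement of the theorem.

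First I would write the eigenvalue relation $\lambda f(x)=\sum_{z\in{\cal O}(x)}f(z)$ at each $x=(a_1,\ldots,a_k,y)$ and sum both sides over $y\in{\cal A}^{n-k}$. The right-hand side splits into two kinds of neighbours: those changing one of the first $k$ coordinates, which contribute
$$\sum_{j=1}^k\sum_{a\in{\cal A}\setminus\{a_j\}}S(a_1,\ldots,a_{j-1},a,a_{j+1},\ldots,a_k),$$
and those changing one of the last $n-k$ coordinates, which after a double count contribute $(n-k)(q-1)S(a_1,\ldots,a_k)$, because every $y'\in{\cal A}^{n-k}$ is a neighbour of exactly $(n-k)(q-1)$ tuples $y$ in $H(n-k,q)$. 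Rearranging shows that $S$ satisfies the eigenvalue equation on $H(k,q)$ with eigenvalue
$$\lambda-(n-k)(q-1)=n(q-1)-qm-(n-k)(q-1)=k(q-1)-qm.$$

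The spectrum of $H(k,q)$ is $\{k(q-1)-qj : j\in\{0,\ldots,k\}\}$, and the hypothesis $k\leq m-1$ forces $m>k$, so $k(q-1)-qm$ is \emph{not} an eigenvalue of $H(k,q)$. Consequently $S$ cannot be a nonzero eigenfunction and must vanish identically, which is the desired conclusion. The main place where care is required is the double-counting step yielding the coefficient $(n-k)(q-1)$; once that neighbour decomposition is written out correctly, the rest is an immediate comparison with the known spectrum of $H(k,q)$ recalled in the preliminaries.
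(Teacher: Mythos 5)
Your argument is correct. Note that the paper itself gives no proof of this statement --- it is quoted from Delsarte with a citation only --- so there is nothing to match against; what you have supplied is a valid self-contained proof. Your computation is sound: summing the relation $\lambda f(x)=\sum_{z\in{\cal O}(x)}f(z)$ over the face, splitting neighbours according to whether the changed coordinate is fixed or free, and observing that each point of the face has exactly $(n-k)(q-1)$ neighbours inside it, yields $\bigl(\lambda-(n-k)(q-1)\bigr)S=AS$ on $H(k,q)$, and since $k(q-1)-qm$ with $m\geq k+1$ lies outside the spectrum $\{k(q-1)-qj: j\in\{0,\dots,k\}\}$, the projected function $S$ must vanish. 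This "quotient onto the fixed coordinates" argument is a genuinely different (and more elementary) route than Delsarte's original one, which works in the Hamming association scheme: there one notes that the characteristic vector of a face with $k$ fixed positions lies in the span of the eigenspaces $V_0\oplus\cdots\oplus V_k$ (its "dual degree" is at most $k$), while $f$ lies in $V_m$ with $m>k$, so the face sum is an inner product of vectors in orthogonal eigenspaces and hence zero. Your version avoids any appeal to the eigenspace decomposition or Krawtchouk polynomials and uses only the degree count and the known spectrum of $H(k,q)$ already recalled in the preliminaries; the scheme-theoretic version generalizes more readily to other association schemes. The only points worth making explicit in a written-up version are the reduction to $i_1=1,\dots,i_k=k$ via a coordinate permutation (an automorphism of $H(n,q)$ preserving eigenfunctions) and the degenerate case $k=0$, where the conclusion is that the total sum of $f$ is zero whenever $m\geq 1$.
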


\begin{corollary}\label{coroDel}
Let $f$ be a $(n(q-1)-mq)$-eigenfunction of $H(n,q)$ for $j\in \{0,\ldots,n\}$ . Then
 for any pairwise distinct elements $i_1,\ldots,i_{m-1}$  of $\{1,\ldots,n\}$ and symbols $a_1,\ldots,a_k$ of ${\cal A}$ the function $f$ has at least two nonzero
 values on the face $\Gamma_{i_1\ldots i_{m-1}}^{a_1\ldots a_{m-1}}$ or all values of $f$ on $\Gamma_{i_1 \ldots i_{m-1}}^{a_1 \ldots a_{m-1}}$  are zeroes.

\end{corollary}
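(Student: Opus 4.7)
The plan is to invoke the preceding Delsarte theorem directly with $k = m-1$, which is permitted since $m-1 \leq m-1$. Applying it to the face $\Gamma_{i_1\ldots i_{m-1}}^{a_1\ldots a_{m-1}}$ gives that the sum of values of $f$ on this face equals zero.

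From the vanishing of this sum, I would argue by cases on the support of $f$ restricted to the face. If every value of $f$ on the face is zero, we are in the second alternative of the corollary. Otherwise, the support of $f$ on the face is nonempty; suppose for contradiction that $f$ has exactly one nonzero value on the face. Then the sum over the face would equal that single nonzero value, contradicting the Delsarte identity. Hence the support must contain at least two values, yielding the first alternative.

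There is no real obstacle here: the statement is literally a two-line reformulation of the Delsarte theorem at the boundary index $k = m-1$, using only the trivial observation that a sum of at most one nonzero term cannot vanish unless that term is zero. The only drafting care needed is to note that the symbols in the corollary are indexed $a_1,\ldots,a_k$ in the displayed statement but should read $a_1,\ldots,a_{m-1}$ to match the coordinate indices $i_1,\ldots,i_{m-1}$; this is a typographical matter and does not affect the argument.
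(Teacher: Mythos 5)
Your argument is correct and is exactly the (implicit) reasoning behind the corollary in the paper, which states it without a separate proof: apply Delsarte's theorem at the admissible boundary index $k=m-1$ to get that the sum of $f$ over the face vanishes, and observe that a vanishing sum with at most one nonzero summand forces all summands to be zero. Your remark about the typographical mismatch between $a_1,\ldots,a_k$ and $a_1,\ldots,a_{m-1}$ is also accurate and does not affect the argument.
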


\section{Spherical and perfect bitrades}
\subsection{Bitrades and eigenfunctions}
We now give a characterization for  perfect bitrades in graphs and
spherical bitrades in Hamming graphs in terms of eigenfunctions of
these graphs.

\begin{proposition}\label{prop_perf}
Let $T_0$ and $T_1$ be disjoint codes in  a graph $G$ of diameter at least
$3$. The ordered pair $(T_0,T_1)$ is a perfect bitrade in
$G$ if and only if $d_{T_0}=d_{T_1}=3$ and $\chi_{T_0}-\chi_{T_1}$
is a $(-1)$-eigenfunction of $G$.
\end{proposition}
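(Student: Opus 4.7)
Plan: Set $f:=\chi_{T_0}-\chi_{T_1}$ and, for each vertex $x$, write $a(x):=|\bar{\mathcal{O}}(x)\cap T_0|$ and $b(x):=|\bar{\mathcal{O}}(x)\cap T_1|$. The perfect bitrade condition asks for $a(x)=b(x)\in\{0,1\}$ at every $x$, and my strategy is to split this into two independent pieces that correspond one-to-one with the two clauses on the right-hand side: the equality $a(x)=b(x)$ will come from the eigenfunction condition, and the bound $a(x),b(x)\leq 1$ will come from the minimum-distance condition.

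The eigenfunction rewrite is a one-line computation. At any $x$,
\[
\sum_{y\in\mathcal{O}(x)}f(y)=\bigl(a(x)-\chi_{T_0}(x)\bigr)-\bigl(b(x)-\chi_{T_1}(x)\bigr),
\]
using $\bar{\mathcal{O}}(x)=\mathcal{O}(x)\cup\{x\}$ and $T_0\cap T_1=\emptyset$. Since $-f(x)=\chi_{T_1}(x)-\chi_{T_0}(x)$, the $(-1)$-eigenfunction equation collapses after cancellation to the single identity $a(x)=b(x)$. Provided $T_0\cup T_1\neq\emptyset$ (otherwise $f\equiv 0$ is not an eigenfunction), this gives: $f$ is a $(-1)$-eigenfunction iff $a(x)=b(x)$ for every $x$.

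The distance side is equally short: two distinct members of $T_i$ lie in a common closed ball iff their distance is $\leq 2$, so $d_{T_i}\geq 3 \iff a(x)\leq 1$ (resp.\ $b(x)\leq 1$) for every $x$. Combining the two equivalences recovers the perfect bitrade condition $a(x)=b(x)\in\{0,1\}$ from ``$d_{T_0},d_{T_1}\geq 3$ and $f$ is a $(-1)$-eigenfunction''.

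The final step is to upgrade ``$\geq 3$'' to the stated equality ``$=3$'', for which I would use the diameter hypothesis. Take $x\in T_0$ (nonempty since $f$ is nonzero) and its forced mate $y\in T_1$; picking any $w$ with $d(x,w)=3$ (available by the diameter bound) and the intermediate vertex $v$ at distance $2$ from $x$ on a shortest $x$--$w$ geodesic, the identity $a(v)=b(v)$ forces a $T_0$-mate of $v$ different from $x$, hence lying at distance exactly $3$ from $x$, so $d_{T_0}=3$; symmetry handles $T_1$. The main technical obstacle is this last upgrade, since one must confirm that the chain of forced mates along the geodesic really yields a new element of $T_0$ and not a collision with $x$ — which is exactly where the diameter-$\geq 3$ hypothesis is used.
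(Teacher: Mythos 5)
Your first two reductions are correct and coincide with the paper's argument: the $(-1)$-eigenfunction equation at $x$ collapses, after cancelling $f(x)$, to $|\overline{{\cal O}}(x)\cap T_0|=|\overline{{\cal O}}(x)\cap T_1|$, and $d_{T_i}\geq 3$ is equivalent to every closed ball meeting $T_i$ at most once. Combining these gives the equivalence of the perfect bitrade condition with ``$d_{T_0},d_{T_1}\geq 3$ and $\chi_{T_0}-\chi_{T_1}$ is a $(-1)$-eigenfunction'', exactly as in the paper.

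The gap is in your final upgrade from $d_{T_i}\geq 3$ to $d_{T_i}=3$. You take $v$ at distance $2$ from $x\in T_0$ and assert that the identity $a(v)=b(v)$ ``forces a $T_0$-mate of $v$ different from $x$''. It forces nothing: since $d(v,x)=2$ we have $x\notin\overline{{\cal O}}(v)$, so $a(v)=b(v)=0$ is entirely consistent — for a typical perfect bitrade most vertices have empty closed-ball intersection with $T_0\cup T_1$, and a vertex at distance $2$ from $x$ can be one of them. The forcing argument only works when started from a vertex whose closed ball is already known to meet $T_0\cup T_1$. This is what the paper's proof does: it picks $z$ \emph{adjacent} to $x$ with $y\notin\overline{{\cal O}}(z)$, where $y$ is the $T_1$-mate of $x$; then $x\in\overline{{\cal O}}(z)$ forces $|\overline{{\cal O}}(z)\cap T_1|=1$, the resulting element of $T_1$ is distinct from $y$ and lies within distance $2$ of $x$, hence at distance exactly $3$ from $y$ because $d_{T_1}\geq 3$ (and symmetrically for $T_0$). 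A secondary problem in your step is the phrase ``available by the diameter bound'': diameter $\geq 3$ guarantees \emph{some} pair at distance $3$, not that your chosen $x$ has eccentricity $\geq 3$, so the vertex $w$ with $d(x,w)=3$ need not exist. You should rebuild the last step around a suitable neighbor $z$ of $x$ (whose existence is the one place where a property of the ambient graph beyond the bare diameter bound is actually needed); the rest of your proof stands.
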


\begin{proof}
Let for a pair of disjoint codes $T_0$ and $T_1$  the equality
$|\overline{{\cal O}}(x)\cap T_0|=|\overline{{\cal O}}(x)\cap
T_1|$ be fulfilled for any vertex $x$ of $G$. We see that this
property holds if and only if
$-(\chi_{T_0}-\chi_{T_1})(x)=\sum\limits_{y\in {\cal
O}(x)}\chi_{T_0}(y)-\chi_{T_1}(y)$ for any  vertex $x$ of $G$,
i.e. $\chi_{T_0}-\chi_{T_1}$ is a (-1)-eigenfunction of $G$.

Obviously, the balls of radius one centered at the vertices of
a code with minimum distance $d$ are disjoint if and only
% for
%any vertex $x$ of $G$ we have $|\overline{{\cal O}}(x)\cap
%T_0|\leq 1$ and $|\overline{{\cal O}}(x)\cap T_1|\leq 1$. The
%latter is valid if and only
if $d$ is at least 3. Let $(T_0,T_1)$
be a perfect bitrade, $x$ and $y$ be neighbors from $T_0$ and
$T_1$. Let $z$ be a neighbor of $x$ but not a neighbor of $y$. By
the definition of a perfect bitrade, $z$ has a neighbor in $T_1$
at distance 3 from $y$.\end{proof}

A graph $G$ of diameter $d$ is called {\it distance-regular}, if there are constants $p_{ij}^k, i,j,k\in\{0,\ldots,d\}$ such that for any pair of vertices $x, y$, $d(x,y)=k$ $p_{ij}^k=|\{z:d(x,z)=i, d(z,y)=j\}|$.
In throughout what follows $\oplus$ denotes the addition via modulo 2.

\begin{lemma}\label{Lemma_sph1} Let $G$ be a distance-regular graph with $p_{11}^1, p_{21}^2\neq 0$. We have the following:

1. Let $C$ be a code in $G$. Then $|{\cal O}(x)\cap
C|\leq 1$ if and only if $d_C\geq 3$.

2. Let $S_0$ and $S_1$ be disjoint codes in $G$.
Then $\chi_{S_0}-\chi_{S_1}$  is a $0$-eigenfunction of $G$
if and only if $|{\cal O}(x)\cap S_0|=|{\cal O}(x)\cap S_1|$ for
any vertex $x$ of $G$.

3.  Let %$S_0$ and $S_1$ be
%disjoint $q$-ary codes of length $n$ such that
$(S_0,S_1)$ be a spherical bitrade in $G$ or
$d_{S_0},d_{S_1}\geq 3$, $\chi_{S_0}-\chi_{S_1}$ be a
$0$-eigenfunction of $G$. Then $d_{S_0}=d_{S_1}=3$.

\end{lemma}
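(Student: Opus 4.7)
The plan is to handle the three parts in turn, with parts 1 and 2 being mostly bookkeeping and part 3 carrying the real content.

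For part 1, the forward direction is immediate: two codewords sharing a common neighbor lie at distance at most $2$. For the reverse, pick $x, y \in C$ with $d(x,y) \leq 2$. If $d(x,y) = 1$, then $p_{11}^1 \neq 0$ supplies a common neighbor $z$. If $d(x,y) = 2$, any shortest path of length $2$ automatically furnishes a common neighbor. Either way ${\cal O}(z) \cap C$ contains both $x$ and $y$.

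Part 2 is just the definition of a $0$-eigenfunction expanded coordinate-wise:
$$\sum_{y \in {\cal O}(x)} \bigl(\chi_{S_0}(y) - \chi_{S_1}(y)\bigr) = |{\cal O}(x) \cap S_0| - |{\cal O}(x) \cap S_1|,$$
so the sum vanishes for every $x$ if and only if the two counts agree at every $x$.

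For part 3, the two alternative hypotheses are equivalent via parts 1 and 2, so I may assume $(S_0, S_1)$ is a spherical bitrade. Part 1 then already gives $d_{S_0}, d_{S_1} \geq 3$, and by symmetry it suffices to prove $d_{S_0} \leq 3$. Pick any $x \in S_0$ and a neighbor $y$ of $x$. Since $x \in {\cal O}(y) \cap S_0$, the bitrade condition forces $|{\cal O}(y) \cap S_1| = 1$ and produces a vertex $z \in {\cal O}(y) \cap S_1$; in particular $d(x,z) \leq 2$. The crucial observation is that $d(x,z) = 1$ is impossible: otherwise $z \in {\cal O}(x) \cap S_1$, so the bitrade condition at $x$ would force $|{\cal O}(x) \cap S_0| = 1$, placing another $S_0$-codeword adjacent to $x$ and contradicting $d_{S_0} \geq 3$. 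Hence $d(x,z) = 2$, and now the hypothesis $p_{21}^2 \neq 0$ supplies a vertex $y'$ with $d(x, y') = 2$ and $d(z, y') = 1$. Applying the bitrade condition at $y'$ (possible because $z \in {\cal O}(y') \cap S_1$) produces $w \in {\cal O}(y') \cap S_0$; since $d(x, y') = 2$, we have $w \neq x$, and the triangle inequality together with $d_{S_0} \geq 3$ forces $d(x, w) = 3$.

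The main obstacle is the distance-$1$ case in part 3, which at first looks like an awkward dead end; it is resolved by turning the bitrade condition back on the vertex $x$ itself to derive a contradiction with $d_{S_0} \geq 3$. Once that case is eliminated, $p_{21}^2 \neq 0$ is exactly the combinatorial input needed to exhibit the second $S_0$-codeword at distance precisely $3$.
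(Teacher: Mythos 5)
Your proposal is correct and follows essentially the same route as the paper: triangles (via $p_{11}^1\neq 0$) and shortest paths for part 1, the definition for part 2, and for part 3 the same construction --- locate a pair $x\in S_0$, $z\in S_1$ at distance $2$, use $p_{21}^2\neq 0$ to find a neighbor $y'$ of $z$ at distance $2$ from $x$, and let the bitrade condition at $y'$ force an $S_0$-codeword at distance exactly $3$ from $x$. The only difference is that you explicitly rule out the case $d(x,z)=1$, a step the paper leaves implicit.
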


\begin{proof}
1. The sufficiency is clear. If there are vertices from $C$ at distance 1 or 2 in $G$, then they obviously have a common neighbor $x$, so $|{\cal O}(x)\cap C|\geq 2$.

2. Follows from the definition of a $0$-eigenfunction of $G$.
%The property $|{\cal O}(x)\cap S_0|=|{\cal O}(x)\cap S_1|$ for any tuple $x$ of $V(H(n,q))$ is equivalent to the fact that $\chi_{S_0}-\chi_{S_1}$  is a $0$-eigenfunction of $H(n,q)$.

3. Since $p_{21}^2$ is nonzero there is a path of length 4: $x, x^1, x^2, x^3$ in $G$
such that $x\in S_i$, $x^2\in S_{i\oplus 1}$, $d(x^3,x)=2$, $i\in
\{0,1\}$. The vertex $x^3$ is a neighbor of $x^2$ from $S_{i\oplus
1}$. By the definition of a spherical bitrade and the second
statement of the current lemma $x^3$ must have a neighbor from
$S_i$. The latter is at distance 3 from $x$ because $d_{S_i}\geq
3$ (see the first statement of
the current lemma), so $d_{S_i}=3$.
\end{proof}

%3. Since $\chi_{S_0}-\chi_{S_1}$  is a $0$-eigenfunction of $H(n,q)$, for any tuple $x$  we have $|{\cal O}(x)\cap S_0|=|{\cal O}(x)\cap S_1|$, which taking into account that $d_{S_0}\geq 3$ and the first statement of the lemma is either 0 or 1.

\begin{proposition}\label{prop_sph1}
 Let $G$ be a distance-regular graph with $p_{11}^1, p_{21}^2\neq 0$. The following statements are equivalent for codes $S_0$ and $S_1$ in $G$:

i.  The pair $(S_0,S_1)$ is a spherical bitrade in $G$.

ii.   The minimum distances of $S_0$ and $S_1$ are 3 and
$\chi_{S_0}-\chi_{S_1}$ is a $0$-eigenfunction of $G$.

iii. The minimum distances of $S_0$ and $S_1$ are 3,
$d(S_0,S_1)=2$  and for any $x\in S_i$ there are exactly
$p^0_{11}/p_{11}^2$ vertices in $S_{i\oplus 1}$ at distance 2 from $x$.
\end{proposition}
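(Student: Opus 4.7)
The plan is to prove (i) $\Leftrightarrow$ (ii) directly from Lemma \ref{Lemma_sph1} and then to establish the cycle (i) $\Rightarrow$ (iii) $\Rightarrow$ (i) by a double-counting argument that exploits the intersection numbers $p_{11}^0$ (the valency of $G$) and $p_{11}^2$ (the number of common neighbors of two vertices at distance~$2$).

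For (i) $\Leftrightarrow$ (ii), assuming (i) the inequality $|{\cal O}(x) \cap S_i| \leq 1$ together with Part~1 of Lemma \ref{Lemma_sph1} forces $d_{S_i} \geq 3$, which Part~3 sharpens to $d_{S_i}=3$, while Part~2 converts the equality $|{\cal O}(x) \cap S_0|=|{\cal O}(x) \cap S_1|$ into the $0$-eigenfunction property of $\chi_{S_0}-\chi_{S_1}$. The converse direction is an immediate repackaging of Parts~1 and~2.

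For (i) $\Rightarrow$ (iii), I first show that $d(S_0,S_1) \geq 2$: if $y \in S_i$ were adjacent to some $y' \in S_{i\oplus 1}$, then $|{\cal O}(y') \cap S_{i\oplus 1}|=0$ (by $d_{S_{i\oplus 1}} \geq 3$) while $|{\cal O}(y') \cap S_i| \geq 1$, contradicting the bitrade condition at $y'$. For $x \in S_i$, I then double-count the pairs $(y,z)$ with $y \in {\cal O}(x)$ and $z \in S_{i\oplus 1} \cap {\cal O}(y)$. From the $y$-side, each of the $p_{11}^0$ neighbors $y$ of $x$ satisfies $|{\cal O}(y) \cap S_i|=1$ (with $x$ as the unique such neighbor), so the bitrade forces $|{\cal O}(y) \cap S_{i\oplus 1}|=1$, giving $p_{11}^0$ pairs in all. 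From the $z$-side, each such $z$ lies at distance exactly $2$ from $x$ (it cannot coincide with $x$ nor be adjacent to $x$), and so contributes $p_{11}^2$ pairs; equating the two counts yields the required number $N_2 = p_{11}^0 / p_{11}^2$, and when $S_i$ is nonempty also $d(S_0,S_1)=2$.

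For (iii) $\Rightarrow$ (i), pick any vertex $w$. Part~1 of Lemma \ref{Lemma_sph1} already gives $|{\cal O}(w) \cap S_i| \leq 1$, so only the equality of these two intersections remains to be shown. Supposing $|{\cal O}(w) \cap S_0|=1$ with witness $x \in S_0$, I run the same double-count in reverse: since $d(S_0,S_1)=2$, every $z \in {\cal O}(y) \cap S_1$ with $y \in {\cal O}(x)$ lies at distance $2$ from $x$, so the total number of pairs $(y,z)$ equals $N_2 \cdot p_{11}^2 = p_{11}^0$ by hypothesis. But this total also equals $\sum_{y \in {\cal O}(x)} |{\cal O}(y) \cap S_1|$, a sum of $p_{11}^0$ terms each bounded by $1$, so every term equals $1$; in particular $|{\cal O}(w) \cap S_1|=1$. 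The main obstacle is the double-counting step, where one must carefully rule out $d(x,z) \in \{0,1\}$ and use $p_{11}^2 \neq 0$ to divide, which is precisely where the distance-regularity hypotheses on $G$ enter.
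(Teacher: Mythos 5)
Your proposal is correct and follows essentially the same route as the paper: (i)$\Leftrightarrow$(ii) is read off from Lemma \ref{Lemma_sph1}, and the equivalence with (iii) rests on the same counting identity — the vertices of $S_{i\oplus 1}$ at distance $2$ from $x\in S_i$ cover pairwise disjoint sets of $p_{11}^2$ neighbors of $x$ each (disjointness coming from $d_{S_{i\oplus1}}=3$), so all $p_{11}^0$ neighbors are covered exactly once iff there are $p_{11}^0/p_{11}^2$ such vertices. The paper packages this as a single ``if and only if'' partition statement where you split it into two directions of a double count, but the substance is identical.
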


\begin{proof}
(i)$\sim$(ii) Follows from the first and the second statements of Lemma \ref{Lemma_sph1}.

% By Lemma \ref{simple}.1 $|{\cal O}(x)\cap S_0|,|{\cal O}(x)\cap S_1|\leq 1$ if and only if $d_{S_0},d_{S_1}\geq 3$.

%Moreover, by Lemma \ref{simple}.2 we see that $|{\cal O}(x)\cap S_1|=|{\cal O}(x)\cap S_1|$ if and only if $\chi_{S_0}-\chi_{S_1}$ is a $0$-eigenfunction.

(i)$\sim$(iii) Let $(S_0,S_1)$ be a spherical bitrade. Then by the
third statement  of Lemma \ref{Lemma_sph1}  we have
$d_{S_0}=d_{S_1}=3$.
 Suppose that  $x\in S_0$ is a neighbor of $y\in S_1$. Then by the definition of a spherical bitrade, $x$ has a neighbor  from $S_0$, which contradicts the first statement of Lemma \ref{Lemma_sph1}.
 Therefore, $d(S_0,S_1)$ is $2$.

Now let $S_0$ and $S_1$ be two codes with minimum distances three such that  $d(S_0,S_1)=2$. A vertex $x$ from $S_0\cup S_1$ has no neighbors in $S_0$ and $S_1$ because  $d_{S_0}=d_{S_1}= 3$, $d(S_0,S_1)=2$.
\begin{equation}\label{simple}
\mbox{ If x is in }S_0\cup S_1 \mbox{ then }|{\cal O}(x)\cap S_0|=|{\cal O}(x)\cap S_1|=0.
\end{equation}

Given a vertex $x$ from $S_i$ let us consider the set $\{y\in
S_{i\oplus 1}:d(x,y)=2\}$,  $i\in \{0,1\}$. Each of the vertices
from this set has exactly $p_{11}^2$ common neighbors with $x$.
Moreover, distinct vertices from $\{y\in S_{i\oplus 1}:d(x,y)=2\}$
have disjoint sets of common neighbors with $x$ because
$d_{S_{i\oplus 1}}=3$. Now each of the neighbors of $x$ is a
neighbor of exactly one vertex from $S_{i\oplus 1}$ if and only if
there are exactly $|{\cal O}(x)|/p_{11}^2=p^0_{11}/p_{11}^2$
vertices from $S_{i\oplus 1}$ at distance 2 from $x$. Taking into
account the property (\ref{simple}), the proposition follows.
\end{proof}

\medskip

In nonbipartite case we have the following characterization for Hamming graphs.

\begin{corollary}\label{prop_sph}
The following assertions are equivalent for $q$  greater or equal
to $3$ and for two disjoint $q$-ary codes $S_0$ and $S_1$ of length
$n$:

i.  The pair $(S_0,S_1)$ is a spherical bitrade in $H(n,q)$.

ii.   The minimum distances of $S_0$ and $S_1$ are 3 and
$\chi_{S_0}-\chi_{S_1}$ is a $0$-eigenfunction of $H(n,q)$.

iii. The minimum distances of $S_0$ and $S_1$ are 3,
$d(S_0,S_1)=2$  and for any $x\in S_i$ there are exactly
$(q-1)n/2$ tuples of $S_{i\oplus 1}$ at distance 2 from $x$.
\end{corollary}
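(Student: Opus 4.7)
The plan is to obtain the corollary as a direct specialization of Proposition~\ref{prop_sph1} to the Hamming graph $H(n,q)$. The Hamming graph is distance-regular, so I only need to verify the two nondegeneracy hypotheses $p_{11}^1\neq 0$ and $p_{21}^2\neq 0$ for $q\geq 3$, and then compute the ratio $p_{11}^0/p_{11}^2$ that appears in item~(iii) of the proposition, matching it with $(q-1)n/2$ in item~(iii) of the corollary.

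First I would compute $p_{11}^1$: if $x,y$ are adjacent in $H(n,q)$, differing say in coordinate $i$, a common neighbor of $x$ and $y$ that is itself adjacent to both must also differ from each only in coordinate $i$; this gives $q-2$ third-symbol choices, so $p_{11}^1=q-2$, which is nonzero exactly when $q\geq 3$. Similarly for $p_{21}^2$: if $x,y$ differ in coordinates $i,j$, a vertex $z$ with $d(z,y)=1,d(x,z)=2$ must differ from $y$ in either coordinate $i$ or $j$, and in each case with a symbol different from both $x$ and $y$ on that coordinate, yielding $p_{21}^2=2(q-2)\neq 0$ for $q\geq 3$. This confirms the hypotheses of Proposition~\ref{prop_sph1}, and so (i)$\Leftrightarrow$(ii) in the corollary is automatic.

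For the condition in (iii), I compute $p_{11}^0$, which is simply the degree $n(q-1)$ of $H(n,q)$, and $p_{11}^2$, which counts common neighbors of two vertices at distance $2$: if $x,y$ differ exactly in coordinates $i,j$, a common neighbor must agree with $x$ in all but one of $\{i,j\}$ and agree with $y$ on the other, giving exactly $p_{11}^2=2$. Hence $p_{11}^0/p_{11}^2=n(q-1)/2$, matching the count in item~(iii). There is essentially no obstacle here; the only thing to be careful with is that both $p_{11}^1$ and $p_{21}^2$ vanish in the binary case, which is exactly why the hypothesis $q\geq 3$ cannot be dropped.
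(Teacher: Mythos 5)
Your proposal is correct and is exactly how the paper obtains this statement: the corollary is presented as a direct specialization of Proposition~\ref{prop_sph1} to $H(n,q)$, and your computations $p_{11}^1=q-2$, $p_{21}^2=2(q-2)$, $p_{11}^0=n(q-1)$, $p_{11}^2=2$ (hence $p_{11}^0/p_{11}^2=n(q-1)/2$) are the intended verifications. Your closing remark that both nondegeneracy conditions fail for $q=2$ correctly explains the hypothesis $q\geq 3$ and matches the paper's framing of this as the ``nonbipartite case.''
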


The eigenvalues of the Hamming graph are $\{n(q-1)-qi:i \in \{0,\ldots,n\}\}$. Then taking into account the eigenfunction representations given in Proposition \ref{prop_perf} and Corollary \ref{prop_sph} we see that a spherical (perfect respectively) bitrade exists in $H(n,q)$ then necessarily $n$ is $qr$ ($qr+1$ respectively), for some $r\geq 1$.

%??? {\bf Remark.} Note that  Proposition \ref{prop_perf} and Lemma
%\ref{Lemma_sph1} above hold for the spherical bitrades in a
%general class of graphs, e.g. distance-transitive graphs with
%cycles of length 3 and 5.

\subsection{Perfect bitrades from spherical bitrades}

\begin{proposition}\label{prop_sphperf}
 Let $(S_0,S_1)$ be a spherical bitrade in $H(qr,q)$. Then  $(S_0\times \{0\} \cup S_1\times \{1\},S_0\times \{1\} \cup S_1\times \{0\})$ is a perfect bitrade.
\end{proposition}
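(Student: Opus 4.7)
The plan is to reduce everything to Proposition \ref{prop_perf}, verifying its three requirements for the pair $(T_0,T_1):=(S_0\times\{0\}\cup S_1\times\{1\},\,S_0\times\{1\}\cup S_1\times\{0\})$: disjointness, $d_{T_0}=d_{T_1}=3$, and that $\chi_{T_0}-\chi_{T_1}$ is a $(-1)$-eigenfunction of $H(qr+1,q)$. The hypothesis of Proposition \ref{prop_perf} is satisfied because $H(qr+1,q)$ has diameter $qr+1\geq 3$. Disjointness is immediate: a common vertex would have to share last coordinate, forcing an element of $S_0\cap S_1$.

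For the eigenfunction condition I would build $\chi_{T_0}-\chi_{T_1}$ as a tensor product. Put $f:=\chi_{S_0}-\chi_{S_1}$ on $H(qr,q)$; by the (i)$\Rightarrow$(ii) part of Proposition \ref{prop_sph1} (equivalently Corollary \ref{prop_sph}), $f$ is a $0$-eigenfunction. On $H(1,q)$ define $g$ by $g(0)=1$, $g(1)=-1$, $g(a)=0$ for $a\geq 2$; a one-line check ($\sum_{b\ne a}g(b)=-g(a)$ at every $a$) shows $g$ is a $(-1)$-eigenfunction. Lemma \ref{lemVV} then yields that the tensor product $f\cdot g$ is a $(0+(-1))=(-1)$-eigenfunction of $H(qr+1,q)$. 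A case split on the last coordinate $a$ of a generic vertex $x|a$ identifies $(f\cdot g)(x|a)$ with $\chi_{T_0}(x|a)-\chi_{T_1}(x|a)$: the case $a=0$ gives $f(x)$, the case $a=1$ gives $-f(x)$, and $a\geq 2$ gives $0$, which is exactly the characteristic function difference of $T_0$ and $T_1$.

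For the distance condition I would split $T_0$ into the two slabs $S_0\times\{0\}$ and $S_1\times\{1\}$. Two words in the same slab have Hamming distance equal to their distance in $S_0$ or $S_1$, hence at least $3$ by part (3) of Lemma \ref{Lemma_sph1} (or by Proposition \ref{prop_sph1}). For words $x|0\in S_0\times\{0\}$ and $y|1\in S_1\times\{1\}$ the distance is $d(x,y)+1$, and the key point is that $d(S_0,S_1)=2$, which is exactly the content of part (iii) of Proposition \ref{prop_sph1}; this gives $d(x|0,y|1)\geq 3$. Hence $d_{T_0}=3$, and $d_{T_1}=3$ by the symmetric argument. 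Proposition \ref{prop_perf} then concludes that $(T_0,T_1)$ is a perfect bitrade. The only non-routine point is remembering to invoke $d(S_0,S_1)=2$ from the characterization of spherical bitrades, without which the cross-slab distance estimate would fail.
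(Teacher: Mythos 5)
Your proposal is correct and follows essentially the same route as the paper: express $\chi_{T_0}-\chi_{T_1}$ as the tensor product of the $0$-eigenfunction $\chi_{S_0}-\chi_{S_1}$ with the $(-1)$-eigenfunction $\chi_0-\chi_1$ on $H(1,q)$, invoke Lemma \ref{lemVV}, check the minimum distance via $d_{S_0}=d_{S_1}=3$ and $d(S_0,S_1)=2$, and conclude by Proposition \ref{prop_perf}. Your write-up is somewhat more explicit (verifying the coordinate-wise identification and the disjointness), but the argument is the same.
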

\begin{proof}
By Corollary \ref{prop_sph} the function $\chi_{S_0}-\chi_{S_1}$
is a $0$-eigenfunction of $H(qr,q)$, $d(S_0,S_1)=2$ and $d_{S_0}$
and $d_{S_1}$  are three. Consider the difference of characteristic functions $\chi_{0}$ and $\chi_1$ of vertices (symbols) 0 and 1 in the complete graph $H(1,q)$.
 It is clear that $\chi_0-\chi_1$ is a
$(-1)$-eigenfunction of $H(1,q)$. By Lemma \ref{lemVV} we conclude
that  $$(\chi_{S_0}-\chi_{S_1})\cdot (\chi_0-\chi_1)=\chi_{S_0\times\{0\}\cup
S_1\times \{1\}}-\chi_{S_0\times\{1\}\cup S_1\times \{0\}}$$ is a
$(-1)$-eigenfunction of $H(qr+1,q).$  The minimum distances
of $S_0$ and $S_1$ are three and the distances between vertices of
$S_0$ and $S_1$ are at least two. This implies that the minimum distances
of $S_0\times\{0\}\cup S_1\times \{1\}$ and $S_0\times\{1\}\cup
S_1\times \{0\}$ are three. By Proposition \ref{prop_perf} we
conclude that $(S_0\times \{0\} \cup S_1\times \{1\},S_0\times
\{1\} \cup S_1\times \{0\})$ is a perfect bitrade.
\end{proof}

%\subsection{Recursive }
\subsection{A recursive construction for spherical bitrades}

\begin{theorem}\label{TheoRecSph}
Let $(S_0,S_1)$ and $(S'_0,S'_1)$ be spherical bitrades in $H(qr,q)$ and $H(qr',q)$. Then $(S_0\times S'_0 \cup S_1\times S'_1,S_0\times S'_1 \cup S_1\times S'_0)$ is a spherical bitrade in $H(q(r+r'),q)$.

\end{theorem}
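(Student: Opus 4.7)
The plan is to mimic the proof of Proposition~\ref{prop_sphperf}: use the eigenfunction characterization from Corollary~\ref{prop_sph} to pass from bitrades to $0$-eigenfunctions, invoke the tensor product lemma to combine them, then translate back.

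First, by Corollary~\ref{prop_sph}, the hypothesis gives that $f:=\chi_{S_0}-\chi_{S_1}$ is a $0$-eigenfunction of $H(qr,q)$ and $g:=\chi_{S'_0}-\chi_{S'_1}$ is a $0$-eigenfunction of $H(qr',q)$, and moreover the minimum distances of each of $S_0,S_1,S'_0,S'_1$ equal $3$, while $d(S_0,S_1)=d(S'_0,S'_1)=2$. By Lemma~\ref{lemVV}, the tensor product $f\cdot g$ is a $0$-eigenfunction of $H(q(r+r'),q)$. Expanding,
\[
f\cdot g=\chi_{S_0\times S'_0}-\chi_{S_0\times S'_1}-\chi_{S_1\times S'_0}+\chi_{S_1\times S'_1}=\chi_{T_0}-\chi_{T_1},
\]
where $T_0=S_0\times S'_0\cup S_1\times S'_1$ and $T_1=S_0\times S'_1\cup S_1\times S'_0$. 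Disjointness of $T_0$ and $T_1$ follows immediately from $S_0\cap S_1=\emptyset$ and $S'_0\cap S'_1=\emptyset$.

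Next I would check that $d_{T_0}=d_{T_1}=3$. Take two distinct tuples $x|x'$ and $y|y'$ in $T_0$. If they belong to the same summand, say $S_0\times S'_0$, then $d(x|x',y|y')=d(x,y)+d(x',y')\geq 3$ since one of $d(x,y), d(x',y')$ is nonzero and both $d_{S_0}, d_{S'_0}\geq 3$. If one lies in $S_0\times S'_0$ and the other in $S_1\times S'_1$, then $d(x,y)\geq d(S_0,S_1)=2$ and $d(x',y')\geq d(S'_0,S'_1)=2$, giving distance at least $4$. Equality with $3$ is realized since $d_{S_0}=3$ and $S'_0\neq\emptyset$: any two elements of $S_0$ at distance $3$ concatenated with a fixed element of $S'_0$ lie in $T_0$ at distance $3$. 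The argument for $T_1$ is symmetric.

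Finally, having established that $T_0$ and $T_1$ are disjoint codes with minimum distance $3$ and that $\chi_{T_0}-\chi_{T_1}$ is a $0$-eigenfunction of $H(q(r+r'),q)$, Corollary~\ref{prop_sph} yields that $(T_0,T_1)$ is a spherical bitrade, as required. I expect no serious obstacle here: the eigenfunction half is essentially one application of Lemma~\ref{lemVV}, and the combinatorial half, the minimum distance verification, is a straightforward case analysis using $d_{S_i},d_{S'_j}\geq 3$ and $d(S_0,S_1),d(S'_0,S'_1)\geq 2$.
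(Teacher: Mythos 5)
Your proposal is correct and follows essentially the same route as the paper: pass to $0$-eigenfunctions via Corollary~\ref{prop_sph}, apply Lemma~\ref{lemVV} to the tensor product, expand it as $\chi_{T_0}-\chi_{T_1}$, check the minimum distances, and translate back via Corollary~\ref{prop_sph}. Your case analysis for $d_{T_0}=d_{T_1}=3$ (in particular, using $d(S_0,S_1)=d(S'_0,S'_1)=2$ for the cross terms) is actually spelled out more carefully than in the paper, which dismisses that step as easy.
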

\begin{proof}
By Corollary \ref{prop_sph}  the functions
$\chi_{S_0}-\chi_{S_1}$ and $\chi_{S'_0}-\chi_{S'_1}$ are
$0$-eigenfunctions of $H(qr,q)$ and $H(qr',q)$ and the minimum
distances of $S_0$, $S_1$, $S'_0$, $S'_1$ are three. Consider the
tensor product
$(\chi_{S_0}-\chi_{S_1})\cdot(\chi_{S'_0}-\chi_{S'_1})$. By Lemma
\ref{lemVV} this function is a $0$-eigenfunction of $H(q(r+r'),q)$. We have the
following equalities:

$$(\chi_{S_0}-\chi_{S_1})\cdot(\chi_{S'_0}-\chi_{S'_1})=(\chi_{S_0}\cdot\chi_{S'_0})-(\chi_{S_1}\cdot\chi_{S'_0})-(\chi_{S_0}\cdot\chi_{S'_1})+(\chi_{S_1}\cdot\chi_{S'_1})=$$
$$=\chi_{S_0\times S'_0}+\chi_{S_1\times S'_1}-\chi_{S_0\times S'_1}-\chi_{S_1\times S'_0}=\chi_{S_0\times S'_0\cup S_1\times S'_1}-\chi_{S_0\times S'_1\cup S_1\times S'_0}.$$

We see that the function $\chi_{S_0\times S'_0\cup S_1\times
S'_1}-\chi_{S_0\times S'_1\cup S_1\times S'_0}$ is a
$0$-eigenfunction of $H(q(r+r'),q)$. Moreover it is easy to see
that the minimum distances of $S_0\times S'_0\cup S_1\times S'_1$
and $S_0\times S'_1\cup S_1\times S'_0$  are also three which
follows from the minimum distances of
 $S_0$, $S_1$, $S'_0$, $S'_1$. By Corollary \ref{prop_sph}
 the pair
 $(S_0\times S'_0 \cup S_1\times S'_1,S_0\times S'_1 \cup S_1\times S'_0)$   is a spherical bitrade.
\end{proof}

\section{Constructions of spherical and perfect bitrades}
% from alternating groups}

It is a well-known fact that the action
of any automorphism of $H(n,q)$ can be represented as the action of a
permutation $\pi$ on the coordinate positions $\{1,\ldots,n\}$ followed by the
action of $n$ permutations $\sigma_{1},\ldots,\sigma_{n}$ of
the alphabet set ${\cal A}$:
%on coordinates $\{1,\ldots,n\}$:
$$\pi(x_{1},\ldots,x_{n})=(x_{\pi^{-1}(1)},\ldots,x_{\pi^{-1}(n)}),$$
$$(\sigma_{1},\ldots,\sigma_{n})(x_{1},\ldots,x_{n})=(\sigma_{1}(x_{1}),\ldots,\sigma_{n}(x_{n})).$$

%A code $C$ of length $n$ is called {\it transitive} ({\it propelinear}) if there is a subgroup of the automorphims group acting transitively (regularly)
%on the set of the codewords of $C$.

Let $Sym_{\cal A}$ and $Alt_{\cal A}$ denote the symmetric and alternating
groups on the elements of the set ${\cal A}$. Define the codes $S_0$ and
$S_1$ as follows:

\begin{equation} \label{S0S1}
\begin{split}
S_0=\{(\pi(0),\pi(1),\ldots,\pi(q-1)): \pi   \in Alt_{\cal A}\} \\
S_1=\{(\pi(0),\pi(1),\ldots,\pi(q-1)): \pi
\in Sym_{\cal A}\setminus Alt_{\cal A}\}
\end{split}
\end{equation}

%\begin{equation} \label{S0S1}
%S_0=\{(\pi(0),\pi(1),\ldots,\pi(q-1)): \pi   \in Alt_{\cal A}\} 
%\mbox{ and }   S_1=\{(\pi(0),\pi(1),\ldots,\pi(q-1)): \pi
%\in Sym_{\cal A}\setminus Alt_{\cal A}\}.
%\end{equation}
Note that
$$S_1=\{(\pi(1),\pi(0),\pi(2),\ldots,\pi(q-1)): \pi   \in Alt_{\cal A}\}.$$
So the group $\{(\pi,\pi,\ldots,\pi): \pi   \in Alt_{\cal A}\}$ acts
regularly on the tuples of the codes $S_0$ and $S_1$ and the group
$\{(\pi,\pi,\ldots,\pi): \pi \in Sym_{\cal A}\}$ acts regularly on the
tuples of  $S_0\cup S_1$. Moreover, the code $S_1$ is
%isomorphic to the code $S_1$ since %$S_1$ the latter is
obtained from $S_0$ by an authomorphism of $H(q,q)$, i.e. a
transposition of coordinate positions.

\begin{theorem} \label{SphericalBitradesAlt}
Let $S_0$ and $S_1$ be the codes defined by (\ref{S0S1}). Then
$(S_0,S_1)$ is a spherical bitrade in ${\mathcal H}(q,q)$ of volume $q!/2$.
\end{theorem}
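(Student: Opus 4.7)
The plan is to verify the equivalent conditions of Corollary \ref{prop_sph}, item (iii), namely: (a) $d_{S_0}=d_{S_1}=3$; (b) $d(S_0,S_1)=2$; and (c) every $x\in S_i$ has exactly $(q-1)q/2$ neighbors at distance $2$ in $S_{i\oplus 1}$. The volume claim is immediate since $|S_0|=|Alt_{\cal A}|=q!/2$.

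The key combinatorial observation, which drives the whole argument, is a description of distance-$2$ pairs inside the set of permutation tuples. Two distinct permutations of $\{0,\ldots,q-1\}$ (viewed as tuples) cannot agree in exactly $q-1$ positions, because agreement on $q-1$ positions forces agreement on the last by the bijectivity. If they agree in exactly $q-2$ positions, then on the two remaining positions the same pair of values must appear in both, hence one tuple is obtained from the other by transposing those two coordinate positions. Thus two permutation tuples are at Hamming distance $2$ if and only if they differ by a position-transposition, and consequently they have opposite parity as permutations.

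From this I obtain (a) and (b) at once. For $(a)$: the distance between distinct elements of $S_0$ (resp. $S_1$) can be neither $1$ (by the above) nor $2$ (same-parity tuples), so $d_{S_0},d_{S_1}\geq 3$; and the existence of a $3$-cycle shows that two even (resp.\ two odd) permutations can lie at Hamming distance exactly $3$, hence $d_{S_0}=d_{S_1}=3$. For (b): the transposition of any two positions sends an element of $S_0$ to an element of $S_1$, and they lie at distance $2$, so $d(S_0,S_1)=2$.

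For (c), pick $x\in S_i$ corresponding to a permutation $\pi$. The permutations at Hamming distance $2$ from $x$ are exactly the $\binom{q}{2}=q(q-1)/2$ tuples obtained by composing $\pi$ with each of the $\binom{q}{2}$ coordinate-position transpositions; each such composition flips parity, so all of them lie in $S_{i\oplus 1}$. This is precisely the count $(q-1)n/2$ of Corollary \ref{prop_sph} with $n=q$, and an application of that corollary finishes the proof. The main obstacle, and really the only one, is the distance-$2$ characterization in the previous paragraph; once it is in hand, (a)--(c) and the parity bookkeeping are formal.
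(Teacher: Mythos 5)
Your proposal is correct and follows essentially the same route as the paper: both verify condition (iii) of Corollary \ref{prop_sph} via the key observation that two permutation tuples are at Hamming distance $2$ exactly when they differ by a coordinate transposition (hence have opposite parity), which yields $d_{S_0}=d_{S_1}=3$, $d(S_0,S_1)=2$, and the count $\binom{q}{2}$. The only cosmetic difference is that the paper checks these facts at the single tuple $(0,1,\ldots,q-1)$ and transfers them to all vertices by the transitive group action, whereas you argue directly for arbitrary pairs.
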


\begin{proof}
We prove that $(S_0,S_1)$ fulfills the statement (iii) of
Corollary \ref{prop_sph},  i.e. $d_{S_i}=3$, where $i\in
\{0,1\}$; $d(S_0,S_1)=2$ and for any $x\in S_i$ there exist
$(^q_2)$ tuples  in $ S_{i\oplus1}$ at distance 2 from $x$ for any
$i\in \{0,1\}$.

Let $x$ be the tuple $(0,1,\ldots,q-1)$, $x \in S_0$. If $\pi$ is
in $Sym_{\cal A}$ then the distance between the tuples $(0,1,\ldots,q-1)$
and $(\pi(0),\pi(1),\ldots,\pi(q-1))$ is always at least 2 and
equals 2 if and only if $\pi$ is a transposition. So by
(\ref{S0S1}) the only tuples from $S_0 \cup S_1$ at distance 2
from $x$ are $(^q_2)$ tuples of $S_1$. Since $S_0$ and $S_1$ are
orbits of the same group and the code $S_1$ is obtained from $S_0$
by an authomorphism of $H(q,q)$
%isomorphic to $S_0$
we have the
desired properties.
\end{proof}
% There are exactly $(^q_2)$ tuples
%of $S_1$ at distance 2 from $x$ because there are $(^q_2)$
%transpositions of $Sym_{\cal A} \setminus Alt_{\cal A}.$ By definition of $S_1$,
%see (\ref{S0S1}), we have $(\pi,\pi,\ldots,\pi)  \in S_1$ if and only if
%$\pi$ is a transposition and at distance 2 from $x$.

\begin{theorem}
For any integer $r\geq 1$ and $q\geq 3$ there is a spherical bitrade in $H(qr,q)$ of volume $(q!)^r/2$ and a perfect bitrade in $H(qr+1,q)$ of volume $(q!)^r$.

\end{theorem}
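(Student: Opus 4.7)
The proof is essentially an assembly of the machinery developed in the previous sections, so the plan is to combine three ingredients in the natural order.

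First, I would set up induction on $r$ to produce a spherical bitrade in $H(qr,q)$ of volume $(q!)^{r}/2$. The base case $r=1$ is exactly Theorem \ref{SphericalBitradesAlt}, which gives the spherical bitrade $(S_0,S_1)$ in $H(q,q)$ built from the even/odd cosets of $Sym_{\mathcal A}$, of volume $q!/2$. For the inductive step, assume a spherical bitrade $(S_0,S_1)$ in $H(qr,q)$ of volume $(q!)^{r}/2$ exists and pair it with the base bitrade $(S'_0,S'_1)$ in $H(q,q)$ of volume $q!/2$. Theorem \ref{TheoRecSph} then yields a spherical bitrade in $H(q(r+1),q)$ whose two parts are $S_0\times S'_0\cup S_1\times S'_1$ and $S_0\times S'_1\cup S_1\times S'_0$.

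Next, I would verify the volume. Since in a spherical bitrade the two parts have equal size, $|S_0|=|S_1|=(q!)^{r}/2$ and $|S'_0|=|S'_1|=q!/2$, the volume of the new bitrade is
\[
|S_0\times S'_0|+|S_1\times S'_1| \;=\; 2\cdot \frac{(q!)^{r}}{2}\cdot \frac{q!}{2} \;=\; \frac{(q!)^{r+1}}{2},
\]
completing the induction.

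Finally, to produce the perfect bitrade in $H(qr+1,q)$, I would feed the spherical bitrade $(S_0,S_1)$ in $H(qr,q)$ of volume $(q!)^{r}/2$ into Proposition \ref{prop_sphperf}. That proposition outputs the pair $(S_0\times\{0\}\cup S_1\times\{1\},\,S_0\times\{1\}\cup S_1\times\{0\})$ as a perfect bitrade in $H(qr+1,q)$, whose volume is $|S_0|+|S_1|=2\cdot (q!)^{r}/2=(q!)^{r}$, as required.

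There is no real obstacle here: the hypothesis $q\geq 3$ is needed only because Theorem \ref{SphericalBitradesAlt} and Corollary \ref{prop_sph} require the nonbipartite case, and both are already at our disposal. The one small point worth stating explicitly is that a spherical bitrade automatically has $|S_0|=|S_1|$ (immediate from Proposition \ref{prop_sph1}(ii), since $\chi_{S_0}-\chi_{S_1}$ is orthogonal to the all-ones vector as a $0$-eigenfunction), which is what makes the multiplicative volume bookkeeping in the induction go through cleanly.
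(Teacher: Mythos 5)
Your proposal is correct and follows exactly the paper's argument: iterate Theorem \ref{TheoRecSph} starting from the base spherical bitrade of Theorem \ref{SphericalBitradesAlt} to get the spherical bitrade in $H(qr,q)$, then apply Proposition \ref{prop_sphperf} to obtain the perfect bitrade. The only difference is that you spell out the volume bookkeeping (including the observation that $|S_0|=|S_1|$), which the paper leaves implicit.
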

\begin{proof}
In order to obtain a spherical bitrade in $H(qr,q)$ we apply
$(r-1)$ times the construction of Theorem \ref{TheoRecSph} with
the  initial bitrade being the spherical bitrade in $H(q,q)$ from
Theorem \ref{SphericalBitradesAlt}. This spherical bitrade in
$H(qr,q)$ implies the existence of a perfect bitrade according to
Proposition \ref{prop_sphperf}.
\end{proof}

\smallskip

In the linear case we also have bitrades from MDS codes of significantly larger volumes than those of constructed in Theorem \ref{SphericalBitradesAlt}.

\begin{theorem} \label{SphericalTradesMDS}
Let $q$ be $p^r$, p be a prime number. Let $M$ be a $q$-ary linear MDS
code with minimum distance 2, $C_0$ and $C_1$ be $q$-ary linear
MDS  codes of length $n$ with minimum distances 3,
 $C_0,C_1\subset M$, $C_0\neq C_1$. Then $(C_0\setminus C_1,
C_1\setminus C_0)$ is a spherical bitrade in ${\mathcal H}(q,q)$
of volume $q^{q-2}-q^{q-3}$.
\end{theorem}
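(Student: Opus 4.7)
The plan is to apply Corollary~\ref{prop_sph} (the equivalence (i)$\Leftrightarrow$(ii)): it suffices to check that the minimum distances of $S_0 := C_0 \setminus C_1$ and $S_1 := C_1 \setminus C_0$ equal $3$, and that $\chi_{S_0} - \chi_{S_1}$ is a $0$-eigenfunction of $H(q,q)$. Since $C_0 \cap C_1$ cancels in the difference, $\chi_{S_0} - \chi_{S_1} = \chi_{C_0} - \chi_{C_1}$, so the real work is to verify that $\chi_{C_0} - \chi_{C_1}$ is a $0$-eigenfunction. The inequalities $d_{S_i} \geq d_{C_i} = 3$ are automatic, and Lemma~\ref{Lemma_sph1}(3) upgrades them to equalities once the eigenfunction property is in hand.

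For the eigenfunction property I would work Fourier-analytically on $F_q^q$ (permitted since $q = p^r$). Writing $\psi_a$ for the additive characters of $F_q^q$, a direct computation gives that $\psi_a$ is an eigenfunction of $H(q,q)$ with eigenvalue $q(q-1) - q\cdot\mathrm{wt}(a)$, so the $0$-eigenspace is spanned by the characters of weight $q-1$. For any linear code $C \subseteq F_q^q$ one has the standard expansion $\chi_C = \tfrac{|C|}{q^q}\sum_{a\in C^\perp}\psi_a$; combined with $|C_0|=|C_1|=q^{q-2}$ this yields
$$\chi_{C_0}-\chi_{C_1}=\frac{1}{q^{2}}\Bigl(\sum_{a\in C_0^\perp\setminus C_1^\perp}\psi_a-\sum_{a\in C_1^\perp\setminus C_0^\perp}\psi_a\Bigr).$$
Hence it is enough to show that every element of the symmetric difference $C_0^\perp\triangle C_1^\perp$ has weight exactly $q-1$.

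This is where the hypotheses on $M$ enter. Since $C_0,C_1$ are distinct $(q-2)$-dimensional subspaces of the $(q-1)$-dimensional MDS code $M$, we must have $C_0+C_1=M$, giving $\dim(C_0\cap C_1)=q-3$ (which simultaneously yields $|S_0|=q^{q-2}-q^{q-3}$) and, dually, $C_0^\perp\cap C_1^\perp=(C_0+C_1)^\perp=M^\perp$, a one-dimensional code spanned by a weight-$q$ vector (because $M$ is MDS of distance $2$). On the other hand, each $C_i^\perp$ is a $[q,2,q-1]$ MDS code (dual of $[q,q-2,3]$), and the standard weight enumerator of such a code has $q-1$ codewords of weight $q$ and $q(q-1)$ of weight $q-1$. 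The $q-1$ codewords of weight $q$ must therefore coincide with $M^\perp\setminus\{0\}$, so every vector in $C_i^\perp\setminus M^\perp$ has weight $q-1$, and the same holds for $C_0^\perp\triangle C_1^\perp \subseteq (C_0^\perp\cup C_1^\perp)\setminus M^\perp$.

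The main obstacle is this last weight-enumerator step: identifying the weight-$q$ part of each $C_i^\perp$ as exactly $M^\perp\setminus\{0\}$. Once this is in place, the Fourier expansion shows the difference is a $0$-eigenfunction, Lemma~\ref{Lemma_sph1}(3) gives $d_{S_i}=3$, and Corollary~\ref{prop_sph} closes the argument; the computation of the volume has already fallen out of the dimension count $\dim(C_0\cap C_1)=q-3$.
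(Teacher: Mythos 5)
Your proof is correct, but it takes a genuinely different route from the paper's. You reduce to Corollary~\ref{prop_sph}(ii) and verify the $0$-eigenfunction property by Fourier analysis: expanding $\chi_{C_0}-\chi_{C_1}$ over additive characters indexed by $C_0^\perp$ and $C_1^\perp$, identifying $C_0^\perp\cap C_1^\perp=M^\perp$ as the span of a full-weight vector, and using the (determined) weight enumerator of the $[q,2,q-1]$ MDS codes $C_i^\perp$ to conclude that every character surviving in the symmetric difference has weight exactly $q-1$, hence lies in the $0$-eigenspace; all the steps check out, including the dimension count giving the volume. The paper instead verifies the spherical-bitrade definition directly by double counting: since $d_{C_i}=3$, the $|C_i|\cdot q(q-1)=q^q-|M|$ neighbors of $C_i$ are pairwise distinct, and since $d_M=2$ none of them lies in $M$, so every vertex outside $M$ has exactly one neighbor in each of $C_0$ and $C_1$ while every vertex of $M$ has none, which immediately gives $|{\cal O}(x)\cap (C_0\setminus C_1)|=|{\cal O}(x)\cap (C_1\setminus C_0)|\in\{0,1\}$. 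The paper's argument is shorter and uses linearity only to compute $|C_0\cap C_1|=q^{q-3}$ (so it would extend to nonlinear MDS codes nested in $M$, with a separate count of the intersection), whereas your duality argument leans on the field structure and MDS weight enumerators but slots the result cleanly into the eigenfunction framework of Section~3 and makes explicit which part of the spectrum the trade lives in.
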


\begin{proof}
We show that $|{\cal O}(x)\cap C_0|=|{\cal O}(x)\cap C_1|$ is zero if $x \in M$ and one otherwise.
 Since $M$ has minimum distance 2, $C_0$ and $C_1$ are subcodes of $M$ we see that any tuple of $M$ has no neighbors in $C_0$ and $C_1$.
It remains to show that if $x$ is not in $M$ then $|{\cal O}(x)\cap C_0|=|{\cal O}(x)\cap C_1|=1$.
%\begin{equation}\label{eq000}\mbox{ If }x\mbox{ is not in }M\mbox{ then }|{\cal O}(x)\cap C_0|=|{\cal O}(x)\cap C_1|=1.\end{equation}
Since $d_{C_0}=d_{C_1}=3$ we  have that $|{\cal O}(x)\cap C_0|,|{\cal O}(x)\cap C_1|\leq 1$.
 Then the  number of the neighbors of $C_i$ could be counted as
$$|C_i|\cdot q(q-1)=q^{q-2}\cdot q(q-1)=q^q-|M|.$$
The neighbors of $C_i$ cannot be in $M$, so the above implies that each of the tuples outside of $M$ is a neighbor of only one tuple in $C_0$ and in $C_1$. Since linear codes $C_0$ and $C_1$ are of dimension $q-2$, they meet in a subspace of dimension $q-3$ and the expression for the volume of the bitrade $(C_0\setminus C_1,
C_1\setminus C_0)$ follows.
\end{proof}

\begin{theorem}
For any integer $r\geq 1$ and $q\geq 3$, $q=p^r$, where $p$ is a prime there is a spherical bitrade in $H(qr,q)$ of volume $2^{r-1}(q^{q-2}-q^{q-3})^r$
and a perfect bitrade in $H(qr+1,q)$ of volume $2^{r}(q^{q-2}-q^{q-3})^r$.

\end{theorem}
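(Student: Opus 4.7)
The plan is to iterate the recursive construction of Theorem \ref{TheoRecSph} $r-1$ times, starting from the MDS-based spherical bitrade in $H(q,q)$ supplied by Theorem \ref{SphericalTradesMDS}, and then to promote the resulting spherical bitrade to a perfect one via Proposition \ref{prop_sphperf}. Because $q$ is a prime power, doubly-extended Reed--Solomon codes furnish two distinct linear $[q,q-2,3]$-MDS codes $C_0,C_1$ contained in a common linear $[q,q-1,2]$-MDS code $M$; Theorem \ref{SphericalTradesMDS} then yields a spherical bitrade in $H(q,q)$ of volume $V_1 := q^{q-2}-q^{q-3}$. This is the base case.

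For the inductive step I would prove, by induction on $r$, that iterating Theorem \ref{TheoRecSph} produces a spherical bitrade in $H(qr,q)$ of volume $V_r = 2^{r-1}V_1^r$. The only combinatorial input is the volume count: if $(S_0,S_1)$ and $(S'_0,S'_1)$ are spherical bitrades of volumes $V$ and $V'$, then each component of the composite bitrade has cardinality $|S_0||S'_0|+|S_1||S'_1|=2VV'$, using the equalities $|S_0|=|S_1|=V$ and $|S'_0|=|S'_1|=V'$. These equalities follow from the $0$-eigenfunction characterization in Corollary \ref{prop_sph}, since any nonzero $0$-eigenfunction on $H(n,q)$ is orthogonal to the all-ones vector and hence has vanishing total sum. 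Telescoping the recursion $V_r=2V_{r-1}V_1$ produces the claimed formula $V_r = 2^{r-1}V_1^r$.

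To finish, I would apply Proposition \ref{prop_sphperf} to the resulting spherical bitrade $(S_0,S_1)$ in $H(qr,q)$. The components of the induced perfect bitrade in $H(qr+1,q)$ are $S_0\times\{0\}\cup S_1\times\{1\}$ and $S_0\times\{1\}\cup S_1\times\{0\}$, each of size $|S_0|+|S_1|=2V_r = 2^{r}V_1^r$, which matches the claim. The substance of the argument already lies in Theorems \ref{TheoRecSph} and \ref{SphericalTradesMDS} together with Proposition \ref{prop_sphperf}; the only real care needed here is the bookkeeping of the factor-of-two doubling that occurs at each recursive step, which is precisely the source of the $2^{r-1}$ and $2^{r}$ prefactors in the volume formulas. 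No deeper obstacle is anticipated.
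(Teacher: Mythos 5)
Your proposal follows exactly the paper's route: iterate Theorem \ref{TheoRecSph} $r-1$ times starting from the MDS-based spherical bitrade of Theorem \ref{SphericalTradesMDS}, then pass to a perfect bitrade via Proposition \ref{prop_sphperf} --- the paper's own proof consists of precisely these two sentences. Your added bookkeeping (the recursion $V_r=2V_{r-1}V_1$ giving $2^{r-1}V_1^r$, the equality $|S_0|=|S_1|$ from the $0$-eigenfunction characterization, and the doubling under Proposition \ref{prop_sphperf}) is correct and merely makes explicit what the paper leaves implicit.
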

\begin{proof}
A spherical bitrade in $H(qr,q)$ of volume $2^{r-1}(q^{q-2}-q^{q-3})^r$ is obtained by Theorem \ref{TheoRecSph} from the
spherical bitrade in $H(q,q)$ described in Theorem \ref{SphericalTradesMDS}. We then apply the construction from Proposition
\ref{prop_sphperf} to the spherical bitrade in $H(qr,q)$ in order to obtain a perfect bitrade in $H(qr+1,q)$.
\end{proof}

\medskip

\noindent
{\bf Remark.} There are other spherical bitrades that
have rather less symmetric structure than the described above. In
$H(5,5)$ using a computer we have found spherical bitrades of the
following volumes: 60, 95, 100, 125. A bitrade of volume 60 can be
obtained by Theorem \ref{SphericalBitradesAlt}, a bitrade of
volume 100 exists by Theorem \ref{SphericalTradesMDS}. A spherical
bitrade of volume 125 could be obtained by taking $C_1$ to be a
coset of a linear MDS code $C_0$ in Theorem
\ref{SphericalTradesMDS}, the proof for this fact is the same.
This bitrade coincides with the bitrade described in the work
\cite{KroVor}.

\section{Lower bound for the volumes of perfect bitrades in  ${\mathcal H}(q+1,q)$}

\begin{theorem}\label{thm_perflow}
The volume of a perfect bitrade in ${\mathcal H}(q+1,q)$ is not less than $q!$.
\end{theorem}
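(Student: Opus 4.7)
The plan is to derive sharp local structure on two-dimensional faces from the Delsarte framework, promote it to a column rigidity around each $x\in T_0$, upgrade this to a permutation-type property of the coordinates of $x$, and finally conclude by double counting over permutations of ${\cal A}$.

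Setting $f=\chi_{T_0}-\chi_{T_1}$, Proposition~\ref{prop_perf} identifies $f$ as a $(-1)$-eigenfunction of $H(q+1,q)$. Since $-1=(q+1)(q-1)-q\cdot q$, the relevant Delsarte parameter is $m=q$, and Corollary~\ref{coroDel} applies to every $2$-dimensional face $F$ of $H(q+1,q)$ (the faces of codimension $q-1$): $f|_F$ is either identically zero or takes at least two nonzero values, and by Delsarte's theorem $\sum_F f=0$. Because the diameter of $F$ is $2$, the hypothesis $d_{T_0},d_{T_1}\geq 3$ forces $|T_0\cap F|,|T_1\cap F|\leq 1$, so each non-trivial $2$-face has exactly one element of $T_0$ and one of $T_1$. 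For each $x\in T_0$, let $c(x)$ be the coordinate in which the unique partner $y(x)\in T_1$ differs from $x$; a short argument using $|\overline{{\cal O}}(x)\cap T_1|=1$ then yields column rigidity: for every $i\neq c(x)$, the column of $H(q+1,q)$ through $x$ in direction $i$ contains $x$ as its only bitrade element.

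The combinatorial heart of the argument is to show that for each $x\in T_0$ the tuple $\pi_{c(x)}(x)\in{\cal A}^q$ obtained by deleting coordinate $c(x)$ is a permutation of ${\cal A}$. Assume for contradiction that $x_p=x_q$ for two coordinates $p\neq q$, both different from $c(x)$. Let $u\in T_1$ be the unique $T_1$-element of the non-trivial $2$-face through $x$ with free coordinates $\{p,q\}$; since $c(x)\notin\{p,q\}$, the partner $y(x)$ lies outside this face, so $u$ is at Hamming distance $2$ from $x$ and differs from it precisely at $p$ and $q$. Examining the partner $x^*\in T_0$ of $u$ by a case analysis on $c(u)$ (which, as a first step, must lie in $\{1,\dots,q+1\}\setminus\{p,q\}$ by the $d_{T_0}\geq 3$ constraint) and invoking the $3$-face through $x$ with free coordinates $\{p,q,c(u)\}$ together with the column rigidity above, one eventually forces either a pair of $T_0$- or $T_1$-elements at Hamming distance less than $3$, or a second bitrade element in a column already shown to contain only $x$. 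This contradicts the hypotheses and establishes the permutation property; by the symmetric role of $T_0$ and $T_1$ the analogous property holds for every $y\in T_1$.

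Once the permutation property is in hand, each $x\in T_0$ has value distribution consisting of one doubled value (occurring at $c(x)$ and at a unique other position $j_x$) together with $q-1$ values of ${\cal A}$ appearing exactly once. Consequently $\pi_i(x)\in Sym_{\cal A}$ if and only if $i\in\{c(x),j_x\}$, and therefore
\[
\#\{(x,i)\in T_0\times\{1,\dots,q+1\}:\pi_i(x)\in Sym_{\cal A}\}=2|T_0|.
\]
A complementary combinatorial step, applying the eigenfunction identity to the $(q+1)q$ lifts of any fixed permutation $\sigma\in Sym_{\cal A}$ in $H(q+1,q)$ and using the column rigidity above, shows that every $\sigma$ is attained by at least two such pairs $(x,i)$. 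Summing this lower bound over the $q!$ permutations of ${\cal A}$ gives $2|T_0|\geq 2\,q!$, and hence $|T_0|\geq q!$.

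The principal obstacle is the permutation property in the second paragraph: converting the local $2$-face and column structure into the global assertion that no coordinate value of $x$ outside $c(x)$ can be repeated requires a delicate case analysis keyed on the interplay between $c(x)$, $c(u)$, and the hypothetically repeated coordinates. The per-permutation lower bound in the final counting step is more subtle than it first appears, but becomes essentially mechanical once the permutation property is established.
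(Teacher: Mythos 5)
Your proof hinges on the ``permutation property'' in the second paragraph: that for every $x\in T_0$ the tuple obtained by deleting the coordinate $c(x)$ is a permutation of ${\cal A}$. This claim is false for general perfect bitrades in $H(q+1,q)$, so no case analysis can establish it. A concrete counterexample is supplied by the paper itself: let $q$ be a prime power, $M$ the $[q,q-1,2]$ MDS code of sum-zero tuples, $C_0\subset M$ a linear $[q,q-2,3]$ MDS code, and $C_1=C_0+v$ a coset of $C_0$ inside $M$ with $v\notin C_0$. As noted in the Remark after Theorem \ref{SphericalTradesMDS}, $(C_0,C_1)$ is a spherical bitrade in $H(q,q)$, and Proposition \ref{prop_sphperf} turns it into a perfect bitrade with $T_0=C_0\times\{0\}\cup C_1\times\{1\}$ and $T_1=C_0\times\{1\}\cup C_1\times\{0\}$. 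Since $C_0$ is linear, the all-zero tuple of length $q+1$ lies in $T_0$; its unique $T_1$-partner is $(0,\ldots,0,1)$, so $c({\bf 0})=q+1$, and deleting that coordinate leaves the all-zero tuple of length $q$, which is not a permutation of ${\cal A}$. (This bitrade has volume $2q^{q-2}>q!$, so it does not contradict the theorem --- only your structural lemma.) Independently of the counterexample, the argument you offer for the lemma (``one eventually forces\ldots'') is not a proof, and the final counting step --- that every $\sigma\in Sym_{\cal A}$ is attained by at least two pairs $(x,i)$ --- is an unjustified surjectivity claim that would also need real work even where the permutation property happens to hold.

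Your preliminary observations (the exactly-one-of-each structure on nontrivial $2$-faces via Corollary \ref{coroDel}, and the column rigidity around each $x\in T_0$) are correct and are in the same spirit as the paper's base case, but the paper then proceeds entirely differently: it proves by induction on $k$ that every face with $k+1$ free positions meets $T_0\cup T_1$ in either $0$ or at least $2(k!)$ vertices. The inductive step takes a nonempty face containing ${\bf 0}\in T_0$, picks a free direction $l\ne j$ (where $j$ is the direction of the $T_1$-partner), and uses the $(q-1)q/2$ distance-$2$ elements of $T_1$ covering ${\cal O}({\bf 0})$ to exhibit $k$ pairwise disjoint nonempty subfaces with $k$ free positions, each contributing $2(k-1)!$ by induction. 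Taking $k=q$ gives $|T_0|+|T_1|\ge 2\,q!$. If you want to salvage your approach, you would need to replace the global permutation property by a local, face-based count of this kind.
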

\begin{proof}
Let $(T_0,T_1)$ be a perfect bitrade in ${\mathcal H}(q+1,q)$, $\Gamma$ be a face in ${\mathcal H}(q+1,q)$  with $(k+1)$, $q\geq k\geq 1$ free positions.
We show that  $|\Gamma\bigcap
T_0|+|\Gamma\bigcap T_1|\geq 2(k!)$ or $|\Gamma\bigcap
T_0|+|\Gamma\bigcap T_1|=0$. The proof is by induction on the number of free positions of $\Gamma$.

The base case is when $k$ is $1$. Let $\Gamma$ be a face in
${\mathcal H}(q+1,q)$  with exactly $2$ free positions. By Proposition
\ref{prop_perf} the function $\chi_{T_0}-\chi_{T_1}$ is a
$(-1)$-eigenfunction of $H(q+1,q)$. Taking into account Corollary
\ref{coroDel} the function $\chi_{T_0}-\chi_{T_1}$ has 0 or at
least 2 nonzero values on $\Gamma$, i.e. $|\Gamma\bigcap
T_0|+|\Gamma\bigcap T_1|$ is 0 or at least 2.

By induction hypothesis we have that any face with exactly $k$
free positions has at least $(k-1)!\cdot 2$ tuples of $T_0\cup
T_1$ or none. Without restriction of generality, suppose that the
face $\Gamma=\Gamma_{1\ldots q-k}^{0\ldots 0}$ in ${\mathcal
H}(q+1,q)$ contains the all-zero tuple $\bf{0}$ from $T_0$. We
show that there are at least $2(k!)$ tuples from $T_0\bigcup T_1$
in $\Gamma_{1\ldots q-k}^{0\ldots 0}$.

By the definition of a perfect bitrade, the ball centered at ${\bf
0}\in T_0$ must contain a unique tuple $y$ of $T_1$. Let $y$ be
different from $\bf{0}$ in the position $j$. All neighbors of
$\bf{0}$ different from $y$ are neighbors of tuples from $T_1$.
The neighbors that are different from $\bf{0}$ in the $j$th
coordinate position are common neighbors with $y\in T_1$. The
remaining $(q-1)q$ neighbors are exactly covered by $(q-1)q/2$
tuples from $T_1$ at distance 2 from $\bf{0}$. Let $l$ be any free
position for the face $\Gamma_{1 \ldots q-k}^{0 \ldots 0}$, i.e.
any position from $\{q-k+1,\ldots,q+1\}$. Since $k\geq 2$, we
choose $l$ to be distinct from the position $j$.

Consider the tuples that are different from $\bf{0}$ only in the $l$th position. Since $l$ is not $j$ for any $\alpha\in {\cal A}
 \setminus \{0\}$ the  tuple $(0,\ldots,0,$ $\underset{
l}\alpha,0,\ldots,0)$  is a neighbor of a unique tuple from
 $T_1$ at distance 2 from $\bf{0}$. We denote this tuple by $y^{\alpha}$. By Proposition \ref{prop_perf} we have $d_{T_1}=3$.
 Then for any distinct $\alpha$ and $\beta$, $\alpha, \beta \in {\cal
 A}\setminus
 \{0\}$ we see that $$y^{\alpha}=(0,\ldots,0,*,0,0,\ldots,0,\underset{
l}\alpha,0,\ldots\ldots,0)$$ and
$$y^{\beta}=(0,\ldots,0,0,*,0,\ldots,0,\underset{
l}\beta,0,\ldots\ldots,0)$$
 have only one common nonzero coordinate position which is
 $l$. We have $(q-1)$ such tuples $y^\alpha, \alpha\in {\cal
 A}\setminus
 \{0\}$ and at most $q-k$ of them have nonzero positions among nonfree
 positions $\{1,\ldots,q-k\}$ of $\Gamma_{1\ldots q-k}^{0 \ldots 0}$. Then
 there are at least $k-1$ values for $\alpha\in {\cal A}\setminus
 \{0\}$ such that the face
 $\Gamma_{1\ldots q-k\, l}^{0\ldots 0\, \alpha}$ contains a
 tuple $y^{\alpha}\in T_1$. Moreover the face
 $\Gamma_{1\ldots q-k\, l}^{0 \ldots 0 0}$ contains a tuple ${\bf 0}\in
 T_0$. We see that there are $k$ disjoint faces with $k$ free positions that are subsets of
 $\Gamma_{1 \ldots q-k}^{0 \ldots 0}$ and each of them contains at least
 $(k-1)!\cdot 2$ tuples from $T_0\bigcup T_1$ by induction hypothesis. The theorem follows when $k$ is $q$.
\end{proof}

\begin{corollary} The volume of a spherical bitrade in $H(q,q)$ is greater or equal to  $q!/2$.

\end{corollary}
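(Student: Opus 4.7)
The corollary should fall out as an almost immediate consequence of the two main results that precede it in the paper, namely Proposition \ref{prop_sphperf} (spherical-to-perfect construction) and Theorem \ref{thm_perflow} (lower bound on perfect bitrades in $H(q+1,q)$). The plan is to reduce a lower bound question about spherical bitrades in $H(q,q)$ to the already-established lower bound on perfect bitrades in $H(q+1,q)$ by doubling the volume via the construction $(S_0,S_1) \mapsto (S_0\times\{0\}\cup S_1\times\{1\},\,S_0\times\{1\}\cup S_1\times\{0\})$.

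First I would fix a spherical bitrade $(S_0,S_1)$ in $H(q,q)$ of volume $v=|S_0|$. The small preparatory step I need is the symmetry $|S_0|=|S_1|$; this follows by a standard double count, since summing $|\mathcal{O}(x)\cap S_0|=|\mathcal{O}(x)\cap S_1|$ over all vertices $x$ and swapping the order of summation gives $|S_0|\cdot q(q-1) = |S_1|\cdot q(q-1)$ (each codeword contributes its degree), so $|S_0|=|S_1|=v$.

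Next I apply Proposition \ref{prop_sphperf} in the case $r=1$: the pair
\[
(T_0,T_1)=(S_0\times\{0\}\cup S_1\times\{1\},\ S_0\times\{1\}\cup S_1\times\{0\})
\]
is a perfect bitrade in $H(q+1,q)$. Its volume is $|T_0|=|S_0|+|S_1|=2v$. By Theorem \ref{thm_perflow}, $|T_0|\geq q!$, and therefore $2v\geq q!$, i.e. $v\geq q!/2$, which is the desired inequality.

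There is essentially no obstacle here beyond checking the trivial volume arithmetic and the $|S_0|=|S_1|$ observation, since all of the substantive combinatorial content has been packaged into Proposition \ref{prop_sphperf} and Theorem \ref{thm_perflow}. The only conceptual point worth flagging is that the construction of Proposition \ref{prop_sphperf} is volume-doubling, so the sharp constant $q!/2$ for spherical bitrades corresponds exactly to the sharp constant $q!$ for perfect bitrades; in particular, the bound is attained by the spherical bitrade of Theorem \ref{SphericalBitradesAlt}, matching the upper construction.
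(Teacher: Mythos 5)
Your proposal is correct and follows exactly the paper's own argument: apply Proposition \ref{prop_sphperf} to turn a spherical bitrade of volume $v$ into a perfect bitrade of volume $2v$ in $H(q+1,q)$, then invoke Theorem \ref{thm_perflow}. Your explicit double-counting justification of $|S_0|=|S_1|$ is a small but welcome addition that the paper leaves implicit.
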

\begin{proof}
The existence of a spherical bitrade in $H(q,q)$ of volume $v$ implies the existence of a perfect bitrade in $H(q+1,q)$ of volume $2v$ by the construction of Proposition \ref{prop_sphperf}. The result follows from the lower bound given by Theorem \ref{thm_perflow}.
\end{proof}

\medskip

{\bf Acknowledgements.} The authors express their gratitude to
Sergey Avgustinovich for critical remarks on the construction from
Theorem \ref{SphericalBitradesAlt}, Vladimir Potapov for providing
a simple argument for the proof of Theorem
\ref{SphericalTradesMDS}, Anna Taranenko, Alexandr Valyuzenich and
Denis Krotov for valuable comments and suggestions.

\medskip

% \noindent
% {\bf Acknowledgements.} The authors thank the referees for
% careful reading and remarks that improved the presentation of the paper.

\end{document}